\newcommand{\urlBiBTeX}[1]{\url{#1}}
\newcommand{\urlbibteX}[1]{\url{#1}}
\def\BibTeX{{\rm B\kern-.05em{\sc i\kern-.025em b}\kern-.08em
    T\kern-.1667em\lower.7ex\hbox{E}\kern-.125emX}}
\def\Ain{A}
\def\Aout{D}
\def\N{\mathcal{N}}
\def\E{\mathcal{E}}  
\def\S{\mathcal {S}}
\def\C{\mathcal {C}}
\newtheorem{theorem}{Theorem}
\newtheorem{corollary}{Corollary}
\newtheorem{lemma}{Lemma}
\theoremstyle{definition}
\newtheorem{definition}{Definition}
\begin{document}

\title {A General Per-Flow Service Curve for GPS}

\author{
         Almut Burchard*, J\"{o}rg Liebeherr** \\[20pt] 
        * Department of Mathematics,   University of Toronto,  Canada. \\[-10pt]
        ** Department of ECE,   University of Toronto, Canada. \\[-5pt]
	 	E-mail:   almut@math.toronto.edu, jorg@ece.utoronto.ca.
        }%

\setcounter{page}{1}
\maketitle
\thispagestyle{plain}
\pagestyle{plain}


\begin{abstract}
Generalized Processor Sharing (GPS), which provides the theoretical underpinnings 
for fair packet scheduling algorithms, has been studied extensively. However, 
a tight formulation of the available service of a flow only exists for traffic 
that is regulated by affine arrival envelopes and constant-rate links.
In this paper, we show that the universal service curve 
by Parekh and Gallager can be extended to concave arrival envelopes and links with time-variable 
capacity. We also dispense with the previously existing assumption of a stable system. 
\end{abstract}

\section{Introduction}
\label{sec:intro}
Generalized Processor Sharing (GPS)  \cite{ParGal93,ParGal94} 
provides the foundation for fair packet scheduling algorithms, a class of traffic algorithms that seek to achieve a (weighted) max-min fair allocation of  the  
link bandwidth between individual or groups of traffic flows. 
GPS is an idealized algorithm in that it takes a fluid-flow view of traffic and allows a link 
to concurrently transmit traffic from arbitrarily many traffic flows. In contrast, a packet scheduler can only transmit one packet at a time and cannot interrupt the transmission of 
a packet. The relevance of GPS to fair packet scheduling algorithms is that the 
departure times of packets in some algorithms, e.g., Weighted Fair Queueing (WFQ) \cite{keshav-WFQ}, occur no later than the transmission time of a single packet of maximum size 
after the departure times with GPS. 

The service available to a flow in GPS is expressed in terms of a 
service curve, which is a function that expresses the amount of guaranteed departures 
of a flow in a time interval where the flow is backlogged.\footnote{In the terminology of 
the network calculus \cite{Book-LeBoudec}, such a service curve is referred to as `strict'. In this paper we exclusively encounter strict service curves.}
The (strict) service curve of a flow at a link with rate $C$ and GPS scheduling 
can be computed from 
the  so-called {\em universal service curve}  derived in \cite{ParGal93,ParGal94}, given by  
\begin{align}
\S(t) =  {\max_{M \subseteq \N}} 
\frac{Ct - \sum_{j\in M} (\sigma_j + \rho_j\,t)}{\sum_{j\not\in M}\phi_j}\, ,  
\label{eq:universalservice}
\end{align}
where $\N$ is the set of flows and $(\phi_j)_{j\in\N}$ are so-called weights. The service curve 
makes the assumption  that (1)  the arrival traffic of each flow $j \in \N$ 
in a time interval of length $\tau$ is bounded by an affine arrival envelope $\E_j (\tau) = \sigma_j + \rho_j  \tau$, 
and (2) the system is stable in the sense that the total average arrival rate 
does not exceed the link capacity ($\sum_{j \in \N} \rho_j \le C$). 
The universal service curve yields a (strict) per-flow service curve $\S_i (t) = \phi_j \S(t)$ 
for a flow $j \in \N$. 
For general scenarios, where the $\E_j$ are not necessarily affine and 
the system may be unstable ($\sum_{j \in \N} \rho_j > C$),  
a pessimistic estimate for the available service 
can be given by the minimum guaranteed rate  $\tfrac{\phi}{\sum_{j \in \N} \phi_j} C$. 
This estimate can be somewhat improved 
by using knowledge of the arrival envelopes  
\cite{QiuKnightly99,QiuKnightly00,LiBuLi07}. 
If the  envelopes of the arrivals in~\eqref{eq:universalservice}  are replaced by 
 envelopes for departures, a generalization to non-affine envelopes is easily achieved. 
As pointed out in \cite[Sec. IV.C]{FidlerSurvey}, since a departure envelope of a flow can be expressed as a min-plus deconvolution of its arrival envelope and per-flow service curve, this only results in implicit expressions for (minimum) per-flow service curves. 

In this paper, we provide the following extensions to the per-flow (strict) service curves obtained from the universal service curve in~\eqref{eq:universalservice}:
\begin{itemize}
\item Arrival envelopes can be arbitrary concave functions; 
\item The link may have a time-variable capacity;
\item The link  need not be stable.
\end{itemize}
These relaxations are achieved by generalizing the concepts of {\em feasible ordering} in  \cite{ParGal93} and {\em feasible partition} in \cite{ZLiZhang98}. Note that the extension to time-variable service rates 
enables the computation of the available service for hierarchical 
schedulers~\cite{HPFQ}. 
We will show that the derived service curve is best-possible. 

In Sec.~\ref{sec:main} we state the main result. 
We provide a brief description of max-min 
fairness in Section~\ref{sec:feasible},
and then introduce the key notion of feasible subsets.
This notion is used in Sec.~\ref{sec:backlog} to derive backlog and output 
bounds. 
In  Section~\ref{sec:proof}, we prove
the main result, Theorem~\ref{thm:leftover}. 
Sec.~\ref{sec:wellposed} discusses GPS for general 
monotone arrival and service processes. 
We conclude the paper in Sec.~\ref{sec:concl}. 
  
\section{Statement of the main result}
\label{sec:main}

Let $\Ain$ and $\Aout$ denote the 
arrival and departure processes
for a flow or an aggregate of flows
arriving at a service element.
(Arrivals and departures for different flows will
be distinguished by subscripts).
The backlog is denoted by $B(t)=\Ain(t)-\Aout(t)$.
The cumulative service process of the element
will be described by a function $C(t)$.
The arrivals in a half-open interval $[s,t)$
are denoted by $\Ain(s,t):=\Ain(t)-\Ain(s)$,
and correspondingly for the departures and the service.
We always assume that arrival, departure, and
service processes are nondecreasing and left-continuous,
with $\Ain(t)=\Aout(t)=C(t)=0$ for $t\le 0$,
and $\Aout(t)\le \Ain(t)$ for all~$t$.

We say that the service element is {\em workconserving}, if
$\Aout(s,t)=C(s,t)$ on every interval that contains no idle period,
and $D(s,t)\le C(s,t)$ otherwise.
An important example is the constant-rate link, $C(t)=Rt$,
which serves traffic at 
the constant rate $R$ whenever the backlog is positive. 
In case the service process is given as
a time-varying rate $\dot C(t)$,
then the service element is workconserving if 
the departure rate satisfies $\dot \Aout(t)=\dot C(t)$
whenever there is a backlog at $t$.

Throughout this paper, we consider 
a finite set $\N$ of flows arriving to a service element.
Each flow $j\in\N$ is associated with a positive weight $\phi_j>0$.  

\begin{definition}
\label{def:WFQ}
A {\em Generalized Processor Sharing (GPS)} 
scheduler is a workconserving scheduling algorithm
which ensures that for any $0\le s<t$
and any flow $i\in\N$ 
that is backlogged on the entire interval $(s,t)$,
the departures satisfy
\begin{align}
\label{eq:WFQ}
\frac{\Aout_i(s,t)}{\phi_i} \ge 
\frac{\Aout_j(s,t)}{\phi_j}\quad \text{for all}\  j\in\N\,.
\end{align}
\end{definition}
Our main result provides a lower bound on $D_i(s,t)$
in terms of the parameters of the scheduler,
the service process, and the traffic arriving 
to each of the flows $j\in\N$.

To proceed, we need some more notation.
An {\em arrival envelope} for an arrival function $\Ain$ is
a nondecreasing function such that
\begin{align*}
\Ain (s,t) \le \E(t-s)
\quad \text{for all}\ 0\le s\le t\,. 
\end{align*}
We also say that the arrivals {\em comply} 
to $\E$ and write $\Ain \lesssim \E$.
By convention we set $\E(\tau)=0$ if $\tau \le 0$.
Without loss of generality, envelopes can be 
taken to be subadditive.

A nondecreasing function $\S$ is a {\em strict service curve}
for a flow at a service element if $\Aout(s,t) \ge \S (t-s)$
whenever the flow is backlogged on the entire 
interval $(s,t)$.
By convention, $\S(\tau)=0$ for $\tau\le 0$.
Without loss of generality, a strict service curve
may be taken to be superadditive and
nonnegative. 

We use $\C$ to denote the strict service curve offered by 
a workconserving service element. In general, 
$\C$ is a strict service curve~if
\begin{align*}
C(s,t)\ge \C(t-s)\quad \text{for all}\  0\le s\le t\,.
\end{align*}
In that case, we say that the service process
{\em complies} to $\C$ and write $C\gtrsim\C$.
As a special case, 
$\C(t)=Rt$ is the strict service curve
for the workconserving link with constant rate $R$.

\begin{theorem}[Leftover service curve]\label{thm:leftover}
Let $\N$ be a finite set of flows arriving
to a GPS scheduler, as in Definition~\ref{def:WFQ}.
Assume that $C\gtrsim\C$.
Fix $i\in\N$. For each $j\in \N\setminus\{i\}$, 
let $\E_j$ be an envelope with $\Ain_j \lesssim \E_j$. 
If $\C$ is convex and each $\E_j$ is concave in $t$,
then
\begin{align}
\label{eq:leftover}
\S_i(t) :=  \max_{M \subseteq \N\setminus \{i\}} 
\frac{\phi_i}{\sum_{j\not\in M}\phi_j} 
\Bigl(\C(t) - \sum_{j\in M}\E_j(t)\Bigr)
\end{align}
is the best-possible strict service curve for flow $i$.
\end{theorem}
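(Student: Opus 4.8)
Two things must be shown: that $\S_i$ is a valid strict service curve for flow $i$, i.e.\ a lower bound on $\Aout_i(s,t)$ on every interval where flow $i$ is backlogged, and that no larger function has this property. A useful preliminary observation, which already uses both hypotheses, is that $\S_i$ is a legitimate service-curve shape: each term in \eqref{eq:leftover} is a positive multiple of $\C(t)-\sum_{j\in M}\E_j(t)$, which is convex because $\C$ is convex and each $\E_j$ is concave, so $\S_i$ is a maximum of convex functions and hence convex; since the term $M=\varnothing$ makes $\S_i(t)\ge\frac{\phi_i}{\sum_{j\in\N}\phi_j}\C(t)\ge 0$ and $\S_i(t)=0$ for $t\le 0$, the function $\S_i$ is nonnegative, vanishes at the origin, and is therefore superadditive. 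No closure operation is needed, and the hypotheses are exactly what guarantees this.

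For the lower bound, fix an interval $(s,t)$ on which flow $i$ is backlogged and assemble three facts. First, flow $i$ backlogged means the element is never idle on $(s,t)$, so workconservation together with $C\gtrsim\C$ gives $\sum_{j\in\N}\Aout_j(s,t)=C(s,t)\ge\C(t-s)$. Second, applying the GPS inequality \eqref{eq:WFQ} to the backlogged flow $i$ gives $\Aout_j(s,t)\le\frac{\phi_j}{\phi_i}\Aout_i(s,t)$ for every $j\in\N$. Third, for flows whose service on $(s,t)$ is limited by their own arrivals one has $\Aout_j(s,t)\le\E_j(t-s)$. I would then let $M\subseteq\N\setminus\{i\}$ be the feasible subset of arrival-constrained flows, bound flows in $M$ by their envelopes and all remaining flows (including $i$) by the fairness inequality, substitute into $\C(t-s)\le\sum_{j}\Aout_j(s,t)$, and solve for $\Aout_i(s,t)$; this yields $\Aout_i(s,t)\ge\frac{\phi_i}{\sum_{j\notin M}\phi_j}\bigl(\C(t-s)-\sum_{j\in M}\E_j(t-s)\bigr)$. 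A short water-filling (exchange) argument then identifies this particular $M$ as the maximizer in \eqref{eq:leftover}: adding a flow $j$ to $M$ increases the bracketed value precisely when $\E_j(t-s)/\phi_j$ lies below the common normalized rate $\S_i(t-s)/\phi_i$, which is exactly the condition characterizing the arrival-constrained flows. Hence the bound equals $\S_i(t-s)$.

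The step I expect to be the main obstacle is the third fact, $\Aout_j(s,t)\le\E_j(t-s)$ for $j\in M$. This is not automatic: a flow may carry into $(s,t)$ a backlog accumulated before $s$, so its departures on $(s,t)$ can exceed $\Ain_j(s,t)$ and are not directly dominated by $\E_j(t-s)$. Controlling this is precisely the purpose of the feasible-subset construction and of the backlog and output bounds; the natural device is to trace flow $j$ back to the start $s_j\le s$ of its current busy period, where its backlog vanishes and $\Aout_j(s_j,t)\le\Ain_j(s_j,t)\le\E_j(t-s_j)$ holds, and then to convert this into a bound over $(s,t)$ using the subadditivity of the concave envelope $\E_j$ and the service flow $j$ has already absorbed on $(s_j,s)$. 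Making this bookkeeping tight and uniform over all flows in $M$, in a way that does not leak the pre-existing backlog into the estimate, is the delicate technical heart of the argument and is where concavity of the envelopes (and, for the sub-interval service, convexity of $\C$) is genuinely used.

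For best-possibility it suffices to show that for each fixed $t$ some admissible GPS execution attains $\Aout_i(s,t)=\S_i(t-s)$ on an interval where flow $i$ is backlogged; any valid strict service curve is then forced to lie below $\S_i$. I would start the system empty, run the service process at exactly $\C$, keep flow $i$ permanently greedy (hence backlogged), let every flow in the maximizing set $M^{\ast}$ send exactly its envelope, and let the remaining flows be greedy so that they stay backlogged and split the leftover bandwidth in proportion to their weights. Flow $i$ then receives exactly $\frac{\phi_i}{\sum_{j\notin M^{\ast}}\phi_j}\bigl(\C(t)-\sum_{j\in M^{\ast}}\E_j(t)\bigr)=\S_i(t)$. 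Here concavity of the $\E_j$ and convexity of $\C$ are what make the construction self-consistent at every instant — they keep the $M^{\ast}$-flows non-backlogged and the others backlogged throughout the interval, rather than merely at its endpoints — so the GPS rule \eqref{eq:WFQ} is respected and the value $\S_i(t)$ is genuinely achieved.
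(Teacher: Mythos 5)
Your outline matches the paper's strategy at the top level (bound the departures of the envelope-constrained set, use workconservation together with the GPS inequality \eqref{eq:WFQ} to solve for $\Aout_i(s,t)$, then exhibit a greedy/lazy scenario for tightness), but the step you yourself flag as ``the delicate technical heart'' is precisely where the paper's actual proof lives, and your sketch of it would fail. The per-flow bound $\Aout_j(s,t)\le\E_j(t-s)$ for $j\in M$ is false in general, and tracing each flow back to the start $s_j$ of its own busy period does not repair it: the surplus $\E_j(t-s_j)-\E_j(t-s)$ must be offset by the service flow $j$ absorbed on $(s_j,s)$, but that service is itself governed by GPS and depends on the backlogs of all other flows --- a circularity that per-flow bookkeeping cannot break. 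The paper breaks it with an \emph{aggregate} statement: Theorem~\ref{thm:output} proves, by induction on feasible subsets (peeling off the flow maximizing $x_k/\phi_k$ via Lemma~\ref{lem:chain}, so that each inductive step only needs the busy period of that single flow), that $\sum_{j\in M}\Aout_j(t)\ge\sum_{j\in M}\inf_{s\le t}\bigl\{\Ain_j(s)+x_jC(s,t)\bigr\}$; no such claim is made, or true, flow by flow. This yields the output-burstiness bound of Corollary~\ref{coro:upper}, whose correction terms $B_j^*(t)$ do not vanish and must be dealt with.

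The second missing idea is how those correction terms are absorbed: not directly with concave envelopes, as you propose, but via a reduction to the affine case. The paper first proves the result for token-bucket envelopes $\sigma_j+\rho_j t$ and a latency-rate curve $R(t-L)$ (Lemma~\ref{lem:LB-LR}), where $B_j^*(t)\le\sigma_j+\rho_jL$ by Eq.~\eqref{eq:B*} and the $\rho_jL$ terms cancel exactly against the latency; then, for general convex $\C$ and concave $\E_j$, it takes tangent lines at $\tau=t-s$ --- concavity puts $\E_j$ below its tangent, convexity puts $\C$ above, with equality at $\tau$ --- applies the affine case, and concludes $\S'_i(\tau)=\S_i(\tau)$. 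Your proposal contains neither the feasible-subset induction nor this tangent-line reduction, so the lower-bound half is not established. The tightness half is essentially the paper's Lemma~\ref{lem:greedy}, with one caveat: the flows outside $M^*$ must still comply with their envelopes (set $\Ain_j=\E_j$, which suffices since $\E_j>\S_j$ keeps them backlogged); letting them be ``greedy'' beyond $\E_j$ would leave the admissible class and prove nothing. One must also verify, as the paper does at the end of Section~\ref{sec:proof}, that the constructed departures actually conform to Definition~\ref{def:WFQ}.
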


We refer to $\S_i$ as the {\em leftover service curve} 
available to flow~$i$ under GPS. Note that there
are no hypotheses on the arrivals from flow~$i$.
If no envelope is available for some flow $j\in\N$,
a conservative estimate can be obtained by setting $\E_j(t)=+\infty$
for all $t>0$.

By construction, $\S_i$ is nonnegative, nondecreasing,
and convex in $t$, with $\S_i(0)=0$ and 
$\frac{\phi_i}{\sum_{j\in\N}\phi_j} \C\ \le\  \S_i\ \le\  \C$.
We will show 
that $\S_i(t)$ equals the service that flow $i$ 
receives in a scenario where it is backlogged on $(0,t)$,
the flows $j\ne i$ are greedy ($\Ain_j=\E_j$), 
and the service element is lazy ($C= \C$),
see Lemma~\ref{lem:greedy}.

Eq.~\eqref{eq:leftover} 
and the definition of the GPS scheduler
are reminiscent of expressions for
max-min fairness. In the proof of the theorem,
we will exploit this connection. 
The convexity and concavity assumptions
will play an important role.

\section{Max-min fairness and feasible subsets}
\label{sec:feasible}

Let $\N$ be a collection of players. As in Section~\ref{sec:main}, let 
$(\phi_j)_{j\in \N}$  be positive weights.
Each player $j\in\N$ requests a nonnegative share
$x_j$ of a resource~$X$.
An allocation $(y_j)_{j\in\N}$ 
with $0\le y_j\le x_j$ for $j\in\N$
is {\em max-min fair},
if $\sum_{j\in\N} y_j = \min\bigl\{\sum_{j\in \N} x_j, X\bigr\}$,
and for each $i\in\N$ with $y_i<x_i$
\begin{align}
\label{eq:maxmin}
\frac{y_i}{\phi_i}\ge \frac{y_j}{\phi_j}\quad \text{for all}\ j\in \N\,.
\end{align}
Here, $y_i$ represents the share allocated to player~$i$.
The first condition requires
the allocation to be {\em waste-free}, 
that is, the entire resource must be used
unless the requests of all players are satisfied. 
Eq.~\eqref{eq:maxmin}
specifies that small requests are satisfied
in full while large requests are served
in proportion to their weights ($\phi_j$). It is known
that these conditions uniquely determine
the allocation.  Explicitly, 
$y_i=\min\{x_i,\phi_if\}$ with
\begin{align}
\label{eq:def-f}
f:= \max_{M\subset \N} \frac{X-\sum_{j\in M} x_j}
{ \sum_{j\not\in M} \phi_j}\,.
\end{align}

The value $f$ is called the {\em fair share}
associated with the allocation problem.
By convention, for $M=\N$ the fraction
takes the value $-\infty$ if the
numerator is negative and $+\infty$
otherwise.  The maximum is attained by the set of satisfied players,
\begin{align}
\label{eq:Msat}
M_{\rm sat}
:=\bigl \{j\in\N\ \big\vert\ x_j\le \phi_j f\bigr\}\,.
\end{align} 
Clearly, the fair share
is nonnegative and jointly convex in $x_j$ and $X$.
It is nondecreasing in $X$ and nonincreasing
in each $x_j$.  Its value
is finite if and only if $\sum_{i\in\N} x_i > X$,
and it satisfies the  lower bound
$f\ge \frac{X}{\sum_{j\in \N} \phi_j}$.

\smallskip
Different from Eq.~\eqref{eq:leftover},
the maximum in Eq.~\eqref{eq:def-f} ranges 
over {\em all} subsets $M\subset\N$.
The two formulas are related as follows.

\begin{lemma}
\label{lem:frac}
Let $M\subset \N$ be a non-empty subset, and $i\in M$.
Then either 
\begin{align*}
\frac{x_i}{\phi_i} 
\ \le\  \frac{X-\sum_{j\in M\setminus\{i\}}x_j}
{\sum_{j\not\in M\setminus\{i\}}\phi_j} 
\ \le \ \frac{X-\sum_{j\in M}x_j}
{\sum_{j\not\in M}\phi_j}\,
\end{align*}
or both inequalities are reversed.
\end{lemma}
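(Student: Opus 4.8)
The plan is to recognize the middle fraction as the \emph{mediant} of the two outer fractions. Write $a=x_i$ and $b=\phi_i$, and abbreviate the numerator and denominator of the rightmost fraction by $c=X-\sum_{j\in M}x_j$ and $d=\sum_{j\notin M}\phi_j$. Since $i\in M$, passing from $M$ to $M\setminus\{i\}$ moves the index $i$ from inside the subset to its complement; hence the numerator of the middle fraction is $X-\sum_{j\in M\setminus\{i\}}x_j=c+x_i=a+c$, and its denominator is $\sum_{j\notin M\setminus\{i\}}\phi_j=d+\phi_i=b+d$. The three fractions in the statement are therefore
\begin{align*}
\frac{a}{b},\qquad \frac{a+c}{b+d},\qquad \frac{c}{d},
\end{align*}
and the middle one is exactly the mediant of the outer two.

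The key elementary fact is that, for positive denominators, the mediant always lies between the two fractions it is formed from. I would verify this directly by cross-multiplication: since $b>0$ and $b+d>0$, the inequality $\frac{a}{b}\le\frac{a+c}{b+d}$ is equivalent to $a(b+d)\le b(a+c)$, i.e.\ to $ad\le bc$; and likewise, since $d>0$, the inequality $\frac{a+c}{b+d}\le\frac{c}{d}$ is equivalent to $d(a+c)\le c(b+d)$, again i.e.\ to $ad\le bc$. Thus both of the claimed inequalities hold simultaneously precisely when $ad\le bc$, and both reverse precisely when $ad\ge bc$. This is exactly the dichotomy asserted by the lemma.

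The only point requiring care is the degenerate case $M=\N$, where $d=\sum_{j\notin M}\phi_j=0$ and the rightmost fraction is governed by the sign convention stated for $M=\N$ after Eq.~\eqref{eq:def-f}. Here the middle denominator $b+d=\phi_i$ is still strictly positive, so the middle fraction remains finite. If $c>0$ the right fraction is $+\infty$, and $\frac{a+c}{\phi_i}-\frac{a}{\phi_i}=\frac{c}{\phi_i}>0$ shows the chain increases; if $c<0$ the right fraction is $-\infty$ and the same computation shows it decreases. In either case the stated dichotomy persists, so the mediant argument covers all non-empty $M$. I do not expect any genuine obstacle here beyond the bookkeeping of how removing $i$ shifts its weight across the denominator and the handling of the $\pm\infty$ convention.
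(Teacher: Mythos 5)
Your proof is correct and essentially the paper's own argument: the paper likewise treats $M=\N$ separately and recognizes the middle fraction as the mediant, verifying betweenness by writing it as the convex combination $\lambda\frac{x_i}{\phi_i}+(1-\lambda)\frac{x'_i}{\phi'_i}$ with $\lambda=\frac{\phi_i}{\phi_i+\phi'_i}\in(0,1)$, which is the same elementary fact your cross-multiplication ($ad\le bc$ in both inequalities) establishes. The only subcase you skip, $c=0$ with $M=\N$, is trivially covered since the convention makes the right fraction $+\infty$ while the left and middle fractions coincide, so nothing substantive is missing.
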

\begin{proof} If $M=\N$, then
the inequalities hold if and only if $\sum_{j\in \N} x_j\le X$.
Otherwise, set $x'_i:=X-\sum_{j\in M} x_j$
and $\phi'_i:=\sum_{j\not\in M}\phi_j>0$, and write
\begin{align*}
\frac{X -\sum_{j\in M\setminus\{i\}}x_j}
{\sum_{j\not\in M\setminus\{i\}} \phi_j}
&= \frac{x_i+x'_i}{\phi_i+\phi'_i}\\
&= \lambda \frac{x_i}{\phi_i}  + (1-\lambda)\frac{x'_i}{\phi'_i}\\
&= \lambda \frac{x_i}{\phi_i}  + (1-\lambda)
\frac{X-\sum_{j\in M}x_j}{\sum_{j\not \in M}\phi_j}\,,
\end{align*}
where $\lambda=\frac{\phi_i}{\phi_i +\phi'_i}$ 
lies strictly between $0$ and $1$.
Therefore either both inequalities hold,
or both fail.
\end{proof}

As a consequence of the lemma,
the fair allocation to flow~$i$ can also
be computed by $y_i=\min\{x_i,f_i\}$, where
\begin{align}
\label{eq:def-fi}
f_i:= \max_{M\subset\N\setminus\{i\}} 
\frac{\phi_i} { \sum_{j\not\in M}\phi_j}
\Bigl(X-\sum_{j\in M}x_j\Bigr)\,.
\end{align}

We next consider the impact that a subset
of requests can have on a max-min fair allocation.

\begin{definition}
\label{def:feasible}
Let $M\subset\N$, and $X>0$.
A collection of requests $(x_j)_{j\in M}$ is 
{\em feasible}, if
\begin{align}
\label{eq:feasible}
\max_{j\in M}\frac{x_j}{\phi_j} \le 
\frac{X-\sum_{j\in M}x_j}{\sum_{j\not\in M} \phi_j}\,.
\end{align}
In that case, $M$ is called a {\em feasible subset}
of $\N$ for the data $(\phi_j)_{j\in\N}$, $(x_j)_{j\in M}$, and $X$.
\end{definition}

Feasibility of $(x_j)_{j\in M}$ 
means that $M_{\rm sat}$, 
the set of satisfied players from Eq.~\eqref{eq:Msat},
contains $M$,
regardless of the values in the set $(x_j)_{j\not\in M}$. 
Conversely, for any set 
of requests $(x_j)_{j\in\N}$,
the corresponding
subset $M_{\rm sat}$ is feasible.
By way of examples, 
a single request $x_i$ is feasible 
if $x_i\le \frac{\phi_i}{\sum_{j\in\N}\phi_j}X$.
A full set of requests $(x_j)_{j\in\N}$
is feasible if $\sum _{j\in \N} x_j\le X$.

\smallskip\noindent{\bf Remark.}
Feasible subsets are closely related
to the  notion of feasible orderings 
introduced in \cite[Sec. V.C]{ParGal93}.
By definition, a {\em feasible ordering} (``$\prec$'') is a total
order on $\N$ with the property that
\begin{align*}
\frac{x_k}{\phi_k}
<\frac{X-\sum_{j\prec k}x_j}{\sum_{j \succeq k}\phi_j}\quad
\text{for all}\ k\in \N\,.
\end{align*}
One can verify that for any feasible ordering, 
the downsets  $M_k:=\{j\mid j\preceq k\}$
are feasible subsets.  
Feasible subsets are also downsets
for the partial order
induced by the feasible partition 
constructed in \cite{ZLiZhang98}.

\smallskip
The next lemma will be used to construct
chains of feasible subsets.
In the case where $M=\N$ and $\sum_{j\in\N}x_j< X$, it implies that
orderings of $\N$ along which
the fraction $\frac{x_j}{\phi_j}$ 
is nondecreasing are feasible. 
This recovers Lemma 5 in \cite{ParGal93}. 
We note in passing that there exist other feasible 
orderings where $\frac{x_j}{\phi_j}$ is not monotone.

\begin{lemma} 
\label{lem:chain}
Let $(x_j)_{j\in  M}$ be a feasible subset for a resource $X>0$.
If $k\in M$ and $x_k$ satisfies
\begin{align*}
\frac{x_k}{\phi_k}=\max_{j\in M}\frac{x_j}{\phi_j}\,,
\end{align*} 
then $M\setminus\{k\}$ is feasible.
\end{lemma}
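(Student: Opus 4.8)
The plan is to deduce this directly from Lemma~\ref{lem:frac} applied with $i=k$. Writing the three fractions of that lemma (with $M$ the given feasible subset and $i=k$), the two outer quantities are $\frac{x_k}{\phi_k}$ and $\frac{X-\sum_{j\in M}x_j}{\sum_{j\not\in M}\phi_j}$, while the middle one is exactly
$\frac{X-\sum_{j\in M\setminus\{k\}}x_j}{\sum_{j\not\in M\setminus\{k\}}\phi_j}$,
which is the right-hand side of the feasibility condition for $M\setminus\{k\}$. So the lemma already supplies the comparison I need, provided I can decide which of its two alternatives holds.

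First I would invoke the hypotheses that $M$ is feasible and that $k$ attains the maximum. By Definition~\ref{def:feasible},
\begin{align*}
\frac{x_k}{\phi_k}=\max_{j\in M}\frac{x_j}{\phi_j}\ \le\ \frac{X-\sum_{j\in M}x_j}{\sum_{j\not\in M}\phi_j}\,,
\end{align*}
so the first of the three fractions does not exceed the third. This excludes the ``reversed'' alternative of Lemma~\ref{lem:frac}, hence both of its inequalities hold as stated; in particular
\begin{align*}
\frac{x_k}{\phi_k}\ \le\ \frac{X-\sum_{j\in M\setminus\{k\}}x_j}{\sum_{j\not\in M\setminus\{k\}}\phi_j}\,.
\end{align*}
(Equivalently, the proof of Lemma~\ref{lem:frac} shows the middle fraction to be a convex combination of the outer two, hence to lie between them, which makes this step airtight even in the degenerate case of equality.) Finally, since $x_k/\phi_k$ is the maximum over all of $M$, it dominates $\max_{j\in M\setminus\{k\}}x_j/\phi_j$; chaining the two bounds yields precisely the feasibility inequality for $M\setminus\{k\}$.

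The argument is short because the real content is packaged into Lemma~\ref{lem:frac}; the only points requiring care are the degenerate cases, and I expect no genuine obstacle beyond checking that the conventions line up. When $M\setminus\{k\}=\emptyset$ (that is, $M=\{k\}$), the left-hand side of the feasibility condition is an empty maximum and the claim is automatic. When $M=\N$, the fractions with empty complement are governed by the sign conventions fixed after Eq.~\eqref{eq:def-f}, and feasibility of $M=\N$ amounts to $\sum_{j\in\N}x_j\le X$; since Lemma~\ref{lem:frac} is stated to cover exactly this case, the same chaining goes through unchanged.
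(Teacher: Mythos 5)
Your proposal is correct and follows essentially the same route as the paper: the paper's proof likewise chains $\max_{j\in M\setminus\{k\}}\frac{x_j}{\phi_j}\le\frac{x_k}{\phi_k}$ (maximality) with the bound $\frac{x_k}{\phi_k}\le\frac{X-\sum_{j\in M\setminus\{k\}}x_j}{\phi_k+\sum_{j\notin M}\phi_j}$ obtained by applying Lemma~\ref{lem:frac} with $i=k$ to the feasibility inequality~\eqref{eq:feasible}. Your extra care about the equality case (via the convex-combination observation) and the degenerate cases $M=\{k\}$ and $M=\N$ is sound and merely makes explicit what the paper leaves implicit.
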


\begin{proof} By the maximality of $k$,
\begin{align*}
\max_{j\in M\setminus \{k\}}
\frac{x_j}{\phi_j}
\le \frac{x_k}{\phi_k}\le 
\frac{C-\sum_{j\in M\setminus\{k\}} x_j}
{\phi_k + \sum_{j\not \in M} \phi_j}\,,
\end{align*}
where the second inequality follows from
Eq.~\eqref{eq:feasible}
by Lemma~\ref{lem:frac}.
Thus $M\setminus\{k\}$
is feasible.
\end{proof}

Let $(y_j)_{j\in\N}$ be the max-min fair allocation of a resource
$X$ resulting from requests $(x_j)_{j\in\N}$.
Denote by $\bar y_i := x_i - y_i$ the {\em unmet demand} of player
$i$.  In terms of the fair share
from Eq.~\eqref{eq:def-f}, 
the unmet demand is given by
$\bar y_i=[x_i-\phi_i f]_+$.
Here, we have used the notation $[x]_+=\max\{x,0\}$.
The waste-free property of the allocation is
equivalent to 
$\sum_{j\in \N} \bar y_j = \bigl[\sum_{j\in\N} x_j - X\bigr]_+\!$.
The unmet demand satisfies the following useful inequalities.

\begin{lemma}
\label{lem:mono-contraction}
Let $(\bar y_j)_{j\in\N}$ be the unmet demands
in the max-min fair allocation of a resource $X$
resulting from requests $(x_j)_{j\in\N}$,
and let $(\bar y'_j)_{j\in\N}$
be defined accordingly from $X'$ and
$(x'_j)_{j\in\N}$.
Then 
\begin{align}
\label{eq:contraction}
\sum_{j\in \N} |\bar y_j -\bar y_j'|
\le \sum_{j\in \N} |x_j-x_j'|  + |X-X'|\,.
\end{align}
Moreover, we have the monotonicity property
\begin{align*}
\left.
\begin{array}{l}
x_j\le x_j'\ \text{for all} \ j\in \N\\
X\ge X'
\end{array}\right\}
\ \Longrightarrow\ \bar y_j \le \bar y'_j\ \text{for all}\ j\in\N\,.
\end{align*}
\end{lemma}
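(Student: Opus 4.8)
The plan is to prove the two claims in sequence, using the explicit formula $\bar y_j = [x_j - \phi_j f]_+$ for the unmet demand, the waste-free identity $\sum_{j\in\N}\bar y_j = [\sum_{j\in\N} x_j - X]_+$, and the monotonicity of the fair share $f$ recorded above (nondecreasing in $X$, nonincreasing in each $x_j$). The monotonicity statement is the easy half, and I would dispatch it first; the contraction inequality~\eqref{eq:contraction} carries the real content.

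For monotonicity, suppose $x_j \le x_j'$ for all $j\in\N$ and $X \ge X'$, and write $f,f'$ for the two associated fair shares. Since $f$ is nondecreasing in $X$ and nonincreasing in each $x_j$, lowering the resource from $X$ to $X'$ and raising every request from $x_j$ to $x_j'$ can only decrease the fair share, so $f' \le f$. Then $x_j' - \phi_j f' \ge x_j - \phi_j f$ for every $j$, and applying the nondecreasing map $[\,\cdot\,]_+$ yields $\bar y_j' \ge \bar y_j$, as required. The extended-real convention $f=+\infty$ in the unsaturated case causes no trouble here, since then $\bar y_j = 0$.

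For the contraction, the key observation is that it suffices to treat perturbations of a single coordinate, because the left-hand side is an $\ell^1$ norm. Concretely, I would connect the data $(X,(x_j))$ to $(X',(x_j'))$ by a chain of intermediate configurations, each obtained from the previous by changing exactly one coordinate (either $X$, or a single $x_k$), and bound $\sum_{j\in\N}|\bar y_j - \bar y_j'|$ by the triangle inequality as a sum of the one-coordinate increments. For a one-coordinate change, the monotonicity just proved shows that all unmet demands move in the same direction, so the $\ell^1$ sum collapses to the absolute value of the change in the total $\sum_{j\in\N}\bar y_j$. By the waste-free identity this total equals $[\sum_{j\in\N} x_j - X]_+$, and since $t\mapsto [t]_+$ is $1$-Lipschitz, changing $X$ by $|X-X'|$ or a single $x_k$ by $|x_k - x_k'|$ changes $\sum_{j\in\N}\bar y_j$ by at most that same amount. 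Summing the increments over the chain, in which each coordinate is modified exactly once, reproduces precisely $\sum_{j\in\N}|x_j - x_j'| + |X-X'|$.

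The main obstacle, and the step I would treat most carefully, is this reduction to a single coordinate together with the same-direction argument: it is what permits the $\ell^1$ output norm to be replaced by the scalar waste-free total, and it relies on monotonicity holding coordinatewise for each elementary step. Once that is in place, the remaining estimates are immediate from the $1$-Lipschitz property of $[\,\cdot\,]_+$ and the triangle inequality, and the convention $f=+\infty$ needs only a brief check in the unsaturated regime.
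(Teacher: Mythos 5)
Your proof is correct, and your route through the contraction inequality differs genuinely from the paper's. For monotonicity you argue via the global fair share $f$ of Eq.~\eqref{eq:def-f} and the formula $\bar y_j=[x_j-\phi_j f]_+$; the paper instead uses the per-flow quantity $f_i$ of Eq.~\eqref{eq:def-fi}, which has the cosmetic advantage of not depending on $x_i$, but your version works just as well because raising all requests and lowering the resource forces $f'\le f$, so $x_j'-\phi_j f'\ge x_j-\phi_j f$ coordinatewise -- the same one-line argument. For Eq.~\eqref{eq:contraction} the paper does a single lattice squeeze rather than your telescoping chain: it introduces the unmet demands $(\bar z_j)$ for the requests $\min\{x_j,x_j'\}$ with resource $\max\{X,X'\}$ and $(\bar w_j)$ for $\max\{x_j,x_j'\}$ with resource $\min\{X,X'\}$, notes by monotonicity that both $(\bar y_j)$ and $(\bar y_j')$ are sandwiched between them, so $\sum_j|\bar y_j-\bar y_j'|\le\sum_j(\bar w_j-\bar z_j)$, and then evaluates the right side exactly via the waste-free identity as a difference of two $[\,\cdot\,]_+$ expressions, bounded by $\sum_j|x_j-x_j'|+|X-X'|$. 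Your chain argument replaces this with $|\N|+1$ one-coordinate perturbations, using monotonicity at each step to collapse the $\ell^1$ distance to the absolute change in the total $[\sum_j x_j-X]_+$, then the $1$-Lipschitz property of $[\,\cdot\,]_+$ and the triangle inequality. Both proofs rest on exactly the same two pillars (coordinatewise monotonicity and waste-freeness); the paper's squeeze is shorter, invoking monotonicity once with no bookkeeping over intermediate configurations, while your decomposition isolates more explicitly the general mechanism that any monotone, waste-free allocation map is an order-preserving $\ell^1$-contraction, step by step. Your handling of the $f=+\infty$ convention in the unsaturated regime is also sound, since there all unmet demands vanish and the totals compared remain finite.
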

\begin{proof} We start with the second claim.
Fix $i\in\N$.  By definition, $\bar y_i=[x_i-f_i]_+$,
and correspondingly for $\bar y'_i$.
It is apparent from Eq.~\eqref{eq:def-fi}
that $f_i$ is nondecreasing
in $X$ and nonincreasing in the variables $x_j$ for $j\ne i$.
This proves monotonicity.

For Eq.~\eqref{eq:contraction}, let $(x_j)_{j\in\N}$
and $(x'_j)_{j\in \N}$ be as in the statement of the theorem.
Denote by
$(\bar z_j)_{j\in\N}$  the unmet demand resulting
from the requests $\min(\{x_j,x'_j\})_{j\in\N}$ for the resource
$\max\{X,X'\}$, and by $(\bar w_j)_{j\in\N}$ be the
unmet demand resulting from requests 
$\max(\{x_j,x'_j\})_{j\in\N}$ for the resource $\min\{X,X'\}$.
By monotonicity,
\begin{align*}
\bar z_j \le \bar y_j \le \bar w_j  \quad 
\text{for all } \  j\in\N \, ,
\end{align*}
and likewise for $\bar y'_j$.
Therefore 
\begin{align*}
\sum_{j\in \N} |\bar y_j -\bar y_j'|
&\le \sum_{j\in \N} (\bar w_j-\bar z_j)\\
&= \Bigl[\sum_{j\in \N} \max\{x_j,x_j'\}-\min\{X, X'\}\Bigr]_+\\
&\qquad - \Bigl[\sum_{j\in \N} \min\{x_j,x_j'\}-\max\{X,X'\}\Bigr]_+\\
&\le \sum_{j\in \N} |x_j-x'_j| + |X-X'|\,,
\end{align*}
where the second step used the waste-free property.
\end{proof}

The lemma implies that the max-min fair allocation
for a fixed value of $X$, viewed as a mapping
$(x_j)_{j\in\N}\mapsto (\bar y_j)_{j\in\N}$, 
contracts the $\ell^1$-distance and preserves the natural order.

\section{Performance bounds}
\label{sec:backlog}

The following theorem says that the aggregate cumulative 
departures from a feasible subset $(x_j)_{j\in  M}$
are at least as large as if each flow $j\in M$ were allocated
a dedicated link with service process $x_jC$.
Note than no assumption is made on busy periods.

\begin{theorem} [Departures]
\label{thm:output}
Let $(\Ain_{j}(t))_{j\in\N}$ be arrivals from
a set of flows to a GPS scheduler 
with service process $C(t)$.  Fix $M\subset\N$, 
and let $(x_j)_{j\in M}$ be a feasible subset of 
requests for the resource $X=1$. Then for all $t\ge 0$,
\begin{align}
\label{eq:output}
\sum_{j\in M} \Aout_j(t) \ge 
\sum_{j\in M} \inf_{s\le t} \bigl\{
\Ain_j(s) + x_j C(s,t)\bigr\}\,.
\end{align}
\end{theorem}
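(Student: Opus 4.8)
The plan is to recast the claimed inequality as a statement about backlogs and then to prove it by coupling the GPS system with $|M|$ imaginary dedicated links. Writing $B_j=\Ain_j-\Aout_j$ for the real backlog of flow $j$ and $\tilde B_j(t):=\sup_{s\le t}\{\Ain_j(s,t)-x_jC(s,t)\}$ for the backlog that flow $j$ would build up if it were served in isolation by a link with cumulative service $x_jC$, one checks that $\inf_{s\le t}\{\Ain_j(s)+x_jC(s,t)\}=\Ain_j(t)-\tilde B_j(t)$ while $\Aout_j(t)=\Ain_j(t)-B_j(t)$. Hence Eq.~\eqref{eq:output} is equivalent to the aggregate backlog domination
\begin{align*}
\sum_{j\in M} B_j(t)\ \le\ \sum_{j\in M}\tilde B_j(t)\qquad\text{for all } t\ge 0,
\end{align*}
which is the inequality I would actually establish. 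This form is attractive because the right-hand side decouples into independent single-link queues, and because it exposes the mechanism of the bound: within a feasible set, the flows that the dedicated links would starve are exactly the flows that GPS over-serves, so deficits and surpluses cancel in the aggregate.

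First I would fix a common, arbitrarily fine partition of $[0,t]$ on which $C$ grows by a small Stieltjes increment per step, and feed the same arrival increments to both systems. On each step the GPS update of the real backlogs is, by Definition~\ref{def:WFQ} together with the workconserving property, exactly a max-min fair allocation of the service increment among the weights $(\phi_j)$, a flow being \emph{unsatisfied} precisely when it remains backlogged; the dedicated-link update of $\tilde B_j$ is the allocation that caps each $j\in M$ at the share $x_j$ of that increment. The engine of the argument is the order-preserving and $\ell^1$-contracting behaviour of the max-min fair map recorded in Lemma~\ref{lem:mono-contraction}: feasibility of $(x_j)_{j\in M}$ in the sense of Definition~\ref{def:feasible} ensures the caps $x_j$ form an admissible sub-allocation, so that on each step the real system removes from the aggregate $M$-backlog at least as much as the dedicated links do. Propagating this comparison across all steps from the common initial value $0$ yields the displayed domination, and hence the theorem.

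The main obstacle is the per-step comparison, which must be proved in the form ``if $\sum_{j\in M}B_j\ge\sum_{j\in M}\tilde B_j$, then the real $M$-drain dominates the virtual one.'' The subtlety is that this is genuinely an aggregate statement: the pointwise rate inequality $\sum_{j\in M}\dot\Aout_j\ge\bigl(\sum_{j\in M}x_j\bigr)\dot C$ is \emph{false} — a single backlogged flow of $M$ sharing the link with many flows outside $M$ receives less than $\bigl(\sum_{j\in M}x_j\bigr)\dot C$ — and it is only after one discards the flows of $M$ that are momentarily empty (whose dedicated links are then idle as well) that the feasibility inequality Eq.~\eqref{eq:feasible} closes the weight bookkeeping. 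I expect to handle this by using the invariant $\sum_{j\in M}B_j\ge\sum_{j\in M}\tilde B_j$ to control the set of momentarily empty flows, and to reduce the weight accounting to a single maximal-ratio flow via Lemma~\ref{lem:chain}, which peels off the flow $k\in M$ maximizing $x_k/\phi_k$ while preserving feasibility. The remaining difficulty is purely analytic: since $C$ is only nondecreasing and left-continuous, the discretization and limit must be carried out at the level of cumulative functions rather than rates, so that the conclusion holds for every $t$ with no differentiability assumption.
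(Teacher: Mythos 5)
Your reduction of Eq.~\eqref{eq:output} to the aggregate backlog bound $\sum_{j\in M}B_j(t)\le\sum_{j\in M}\tilde B_j(t)$ is sound: it is exactly Corollary~\ref{coro:backlog} (with $\tilde B_j=B_j^*$, via $\Aout_j=\Ain_j-B_j$ and the Lindley/Reich formula for a workconserving link with service $x_jC$). The gap is in the engine you propose for proving it. The per-step comparison lemma --- ``if $\sum_{j\in M}B_j\ge\sum_{j\in M}\tilde B_j$, then the real $M$-drain dominates the virtual one'' --- is false as stated, so your invariant is not self-propagating. Concretely, take $\N=\{1,2,3\}$, $M=\{1,2\}$, $\phi_1=\phi_3=1$, $\phi_2=100$, $x_1=0.001$, $x_2=0.9$; this is feasible for $X=1$ since $\max\{0.001,\,0.009\}\le 0.099$ in Eq.~\eqref{eq:feasible}. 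Consider a joint state in which flow $3$ is heavily backlogged, real flow $1$ is backlogged with $B_1=\beta>0$ while its virtual link is empty, and real flow $2$ is empty with no current arrivals while its virtual link holds $\tilde B_2=\beta$. The aggregate hypothesis holds with equality, but the instantaneous max-min allocation (requests $(\infty,0,\infty)$, resource $\dot C$) has fair share $f=\dot C/2$, so the real $M$-drain is $\phi_1 f=0.5\,\dot C$, whereas the virtual $M$-drain is $x_2\dot C=0.9\,\dot C$: the real aggregate crosses strictly above the virtual one. This particular state happens to be unreachable --- because $x_1\le\phi_1/\sum_{j\in\N}\phi_j$, flow $1$ can never lag its dedicated link --- but your argument supplies no reachability control, and the information that excludes such states (here, the sub-invariant $B_1\le\tilde B_1$ for the feasible singleton $\{1\}$) is precisely what remains unproven. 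You flag this step as ``the main obstacle,'' and it is: Lemma~\ref{lem:chain} alone does not close it.

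The repair would be to strengthen the invariant to hold simultaneously for \emph{every} feasible subset $M'\subseteq M$, and tracking that whole family through time is essentially an induction over feasible subsets --- which is what the paper does directly, and without any discretization or differentiability assumptions. The paper's proof inducts on $|M|$: it peels off the flow $k$ maximizing $x_k/\phi_k$ via Lemma~\ref{lem:chain}, lets $s$ be the start of flow $k$'s busy period containing $t$ (so $\Aout_k(s)=\Ain_k(s)$), splits on whether $\Aout_k(s,t)\ge x_kC(s,t)$, and in the hard case combines Eq.~\eqref{eq:WFQ} with workconservation and feasibility before applying the inductive hypothesis for $M\setminus\{k\}$ at time $s$. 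That busy-period mechanism replaces your time-stepping entirely, working with cumulative quantities for left-continuous nondecreasing $C$ with no rates or limits needed. A secondary caution: your claim that Definition~\ref{def:WFQ} makes each small-step update \emph{exactly} a max-min fair allocation overstates what the definition guarantees for general processes (see the discussion in Sec.~\ref{sec:wellposed}); the theorem only requires the one-sided inequality Eq.~\eqref{eq:WFQ} on intervals where a flow is backlogged, together with workconservation, and the paper's proof uses nothing more.
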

\begin{proof}
We proceed by induction on the number of elements of $M$.
When $M=\emptyset$, there is nothing to show.

For the inductive step, let
$M\subset \N$ be a non-empty feasible subset,
and suppose the claim has already been established for 
its proper feasible subsets.  
Choose $k\in M$ to maximize
the ratio $\frac{x_j}{\phi_j}$.
By Lemma~\ref{lem:chain}, $M\setminus\{k\}$ is feasible. 
By the inductive hypothesis,  for all $t\ge 0$,
\begin{align}
\label{eq:inductive}
\sum_{j\in M\setminus \{k\}} D_j(t)
\ge \sum_{j\in M\setminus\{k\}} 
\inf_{r\le t} \bigl\{A_j(r) + x_j C(r,t)\bigr\}\,.
\end{align}
Fix $t>0$ and let $s$ be the start of the busy period for flow $k$
that contains $t$. 
If $D_k(s,t)\ge x_k C(s,t)$, then 
\begin{align*}
D_k(t) = A_k(s) + x_k C(s,t)\,,
\end{align*}
since $D_k(s)=A_k(s)$.
Eq.~\eqref{eq:output} follows by adding
Eq.~\eqref{eq:inductive}. 

Otherwise, since flow $k$ is backlogged on $(s,t)$,
\begin{align*}
\frac{D_j(s,t)}{\phi_j} \le \frac{D_k(s,t)}{\phi_k}
 < \frac{x_k}{\phi_k}C(s,t)
\end{align*} 
for all $j\in \N$
by Eq.~\eqref{eq:WFQ}. Therefore
\begin{align*}
\sum_{j\not\in M} D_j(s,t) 
&< \Bigl(\sum_{j\not\in M} \phi_j \Bigr) \frac{x_k}{\phi_k}C(s,t)\\
&\le \Bigl(1-\sum_{j\in M}x_j\Bigr) C(s,t)
\,,
\end{align*}
where the second inequality is by the feasibility of
$(x_j)_{j\in M}$.
Since the scheduler is workconserving,
it follows that
\begin{align*}
\sum_{j\in M} D_j(s,t) 
&> \sum_{j\in M}x_j C(s,t)\,,
\end{align*}
and therefore
\begin{align*}
\sum_{j\in M} D_j(t)
&> \sum_{j\in M}\bigl(D_j(s) + x_j C(s,t)\bigr)\,.
\end{align*}
Clearly, $D_k(s)=A_k(s)$ by the choice of $s$.
For the flows $j\ne k$, we
use Eq.~\eqref{eq:inductive} at time $s$ to obtain
\begin{align*}
\sum_{j\in M\setminus\{k\}} \!\!\!\bigl(\!D_j(s)+x_jC(s,t)\!\bigr)
&\ge\!\!\!\!\sum_{j\in M\setminus\{k\}}
\!\!\!
\inf_{r\le s} \bigl\{\!A_j(r) + x_j C(r,t)\!\bigr\}\,.
\end{align*}
Eq.~\eqref{eq:output} follows once we add
the term for $j=k$ and
extend the range of the infima to $r\le t$.
This completes the induction.
\end{proof}

In the case where $M=\{i\}$, Theorem~\ref{thm:output}
yields 
\begin{align*}
D_i(t)  \ge \inf_{s\le t} 
\Bigl\{ \Ain_i(s) + \frac{\phi_i}{\sum_{j\in\N}\phi_j} C(s,t)\Bigr\}\,.
\end{align*}
More generally, the theorem implies the following key estimates.

\begin{corollary} [Backlog]
\label{coro:backlog}
Define
\begin{align*}
B_j^*(t) := \sup_{r\le t} 
\left\{A_j(r,t)-x_j C(r,t)\right\}
\end{align*}
for $j\in  M$. Under the assumptions of Theorem~\ref{thm:output},
\begin{align}
\label{eq:backlog}
\sum_{j\in M} B_j(t) \le \sum_{j\in M} B_j^*(t)\,,
\qquad t\ge 0\,.
\end{align}
\end{corollary}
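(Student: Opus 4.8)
The plan is to derive Corollary~\ref{coro:backlog} directly from the departure bound in Theorem~\ref{thm:output}, exploiting the identity $B_j(t)=A_j(t)-D_j(t)$ together with the definition of $B_j^*(t)$. Since $\sum_{j\in M}B_j(t)=\sum_{j\in M}A_j(t)-\sum_{j\in M}D_j(t)$, the strategy is to bound the aggregate departures from below using Eq.~\eqref{eq:output}, which will translate into the desired upper bound on the aggregate backlog.

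First I would start from Eq.~\eqref{eq:output}, which gives
\begin{align*}
\sum_{j\in M} D_j(t) \ge
\sum_{j\in M} \inf_{s\le t} \bigl\{ A_j(s) + x_j C(s,t)\bigr\}\,.
\end{align*}
Subtracting both sides from $\sum_{j\in M} A_j(t)$ and using $A_j(t)=A_j(s)+A_j(s,t)$ turns each infimum into a supremum:
\begin{align*}
A_j(t)-\inf_{s\le t}\bigl\{A_j(s)+x_jC(s,t)\bigr\}
= \sup_{s\le t}\bigl\{A_j(t)-A_j(s)-x_jC(s,t)\bigr\}
= \sup_{s\le t}\bigl\{A_j(s,t)-x_jC(s,t)\bigr\}\,,
\end{align*}
which is exactly $B_j^*(t)$. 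Summing over $j\in M$ then yields Eq.~\eqref{eq:backlog} termwise, so the corollary follows without any further induction or case analysis.

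The verification is essentially bookkeeping: the only points requiring care are that the arrival functions are finite so that the subtraction $A_j(t)-A_j(s)$ is well defined, and that $\inf$ becomes $\sup$ correctly under negation. Because Theorem~\ref{thm:output} holds for every $t\ge 0$ and the manipulation is applied at each fixed $t$ separately, the bound holds for all $t\ge 0$ as claimed.

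I do not anticipate a genuine obstacle here; the main subtlety is purely notational, namely confirming that the range $s\le t$ (equivalently $r\le t$ in the statement of $B_j^*$) is preserved through the negation of the infimum and that the supremum defining $B_j^*$ is attained or bounded appropriately when $A_j$ and $C$ are nondecreasing and left-continuous. Everything else is an immediate consequence of the departure bound already established.
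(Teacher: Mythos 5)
Your proposal is correct and is essentially the paper's own proof, which simply writes $B_j(t)=A_j(t)-D_j(t)$ and applies Eq.~\eqref{eq:output}; you have merely spelled out the routine step of converting the infimum into the supremum defining $B_j^*(t)$. No gaps, and no meaningful difference in approach.
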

\begin{proof} Write $B_j(t)=A_j(t)-D_j(t)$ and 
apply Eq.~\eqref{eq:output}.
\end{proof}

\begin{corollary} [Output burstiness]
\label{coro:upper}
Under the assumptions of Theorem~\ref{thm:output},
\begin{align*}
\sum_{j\in M} 
D_j(s,t)\le \sum_{j\in M} \bigl(B_j^*(t) +x_j C(s,t)\bigr)\,,
\quad 0\le s\le t\,.
\end{align*}
\end{corollary}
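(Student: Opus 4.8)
The plan is to derive the upper bound on aggregate departures in the interval by pairing two one-sided bounds on the cumulative departures at the endpoints: the crude upper bound $D_j(t)\le A_j(t)$ at time $t$, and the lower bound on departures up to time $s$ supplied by Theorem~\ref{thm:output}. Writing $D_j(s,t)=D_j(t)-D_j(s)$ and summing over $j\in M$, the point is that an upper bound for $\sum_{j\in M}D_j(t)$ minus a lower bound for $\sum_{j\in M}D_j(s)$ yields an upper bound for $\sum_{j\in M}D_j(s,t)$.

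Concretely, I would first apply Theorem~\ref{thm:output} at the time $s$, which is simply the stated inequality with $t$ relabeled as $s$, to obtain
\begin{align*}
\sum_{j\in M} D_j(s) \ge \sum_{j\in M}\inf_{r\le s}\bigl\{A_j(r)+x_j C(r,s)\bigr\}\,.
\end{align*}
Using $D_j(t)\le A_j(t)$ term by term and subtracting, this gives
\begin{align*}
\sum_{j\in M} D_j(s,t)
&\le \sum_{j\in M}\Bigl(A_j(t)-\inf_{r\le s}\bigl\{A_j(r)+x_jC(r,s)\bigr\}\Bigr)\\
&= \sum_{j\in M}\sup_{r\le s}\bigl\{A_j(r,t)-x_jC(r,s)\bigr\}\,,
\end{align*}
where the negated infimum has been rewritten as a supremum and I have used $A_j(t)-A_j(r)=A_j(r,t)$.

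The final step is the right algebraic decomposition of the summand. Since $C(r,t)-C(r,s)=C(s,t)$, I would split $A_j(r,t)-x_jC(r,s)=\bigl(A_j(r,t)-x_jC(r,t)\bigr)+x_jC(s,t)$, so that the term $x_jC(s,t)$ factors out of the supremum and
\begin{align*}
\sup_{r\le s}\bigl\{A_j(r,t)-x_jC(r,s)\bigr\}
&=\sup_{r\le s}\bigl\{A_j(r,t)-x_jC(r,t)\bigr\}+x_jC(s,t)\\
&\le B_j^*(t)+x_jC(s,t)\,,
\end{align*}
the last inequality because enlarging the range of the supremum from $r\le s$ to $r\le t$ can only increase it, and the resulting quantity is exactly $B_j^*(t)$ by its definition. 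Summing over $j\in M$ yields the claim.

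I do not expect a genuine obstacle here: once Theorem~\ref{thm:output} is available, the result is essentially bookkeeping. The only conceptual step is recognizing that one should combine the sharp lower bound at the left endpoint $s$ with the trivial upper bound $D_j(t)\le A_j(t)$ at the right endpoint $t$, rather than trying to control both endpoints with the same tool. The sole point deserving a moment's care is the sign bookkeeping in the split above, since one must confirm that isolating the common term $x_jC(s,t)$ and relaxing the constraint on $r$ each move in the correct direction for an upper bound.
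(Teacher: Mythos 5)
Your proof is correct and follows essentially the same route as the paper: both combine the trivial bound $D_j(t)\le A_j(t)$ with Theorem~\ref{thm:output} applied at time $s$, split off the term $x_jC(s,t)$ via $C(r,t)=C(r,s)+C(s,t)$, and enlarge the supremum range from $r\le s$ to $r\le t$ to recognize $B_j^*(t)$. Your write-up merely makes explicit the sign bookkeeping that the paper compresses into its final line.
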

\begin{proof} 
By Theorem~\ref{thm:output},
\begin{align*}
\sum_{j\in M^*} D_j(s,t)
&\le \sum_{j\in M^*} \bigl( A_j(t)-D_j(s)\bigr)\\
&\le \sum_{j\in M^*}
\sup_{r\le s} \bigl\{ A_j(r,t)-x_j C(r,s)\bigr\}\\
&\le \sum_{j\in M^*} \bigl(B_j^*(t) +x_j C(s,t)\bigr)\,.
\end{align*}
In the last step, we have extended the range of
the supremum to $r\le t$ and applied the definition of $B_j^*(t)$.
\end{proof}

For later use, we note that if
$\Ain_j(t)\lesssim \sigma_j+\rho_j t$ and
$C(t)\gtrsim R(t-L)$ with $\rho_j\le x_jR$, then 
\begin{align}
\label{eq:B*}
B_j^*(t) \le \sigma_j +\rho_j L\, , \qquad t\ge 0\,.
\end{align}
Corollary~\ref{coro:backlog} implies 
Theorem 4 in \cite{ParGal93} as follows. 
The assumption in  \cite{ParGal93} is that
the arrivals comply to token-bucket envelopes,
$A_j\lesssim \sigma_j+\rho_j t$, 
that the link offers a constant-rate service
$C \gtrsim Rt$, and  that the
stability condition $\sum_{j\in\N}\rho_j <R$ holds.
If we choose $x_j=\frac{\rho_j}R$, then
$\sigma_j-\sigma_j^t$  
equals $B_j^*(t)-B_j(t)$, where 
$\sigma_j^t$ is defined in \cite{ParGal93} as the sum 
of the filling level of the token bucket and the backlog at time $t$.  
Further, in \cite{ParGal93} the set 
$M$ is assumed to be a downset for a feasible ordering of $\N$.
Under these assumptions, Eq.~\eqref{eq:backlog} 
reduces to the central conclusion in  \cite{ParGal93} that
$\sum_{j\in M}\sigma_j^t\le \sum_{j\in M}\sigma_j $.

\section{The leftover service curve}
\label{sec:proof}
Consider the definition of the leftover service curve
$\S_i$ in Eq.~\eqref{eq:leftover}. 
It follows from Lemma~\ref{lem:frac} that
\begin{align*}
\min\bigl\{\E_i(t),\S_i(t)\bigl\} = \min\bigl\{\E_i(t),\phi_i \S(t)\bigr\}\,,
\end{align*}
where
\begin{align}
\label{eq:S-uni}
\S(t) := \max_{M\subset \N} \frac{\C(t)-\sum_{j\in M}\E_j(t)}
{\sum_{j\not\in M}\phi_j}\,.
\end{align}
Note the structural similarities of Eq.~\eqref{eq:leftover} to Eq.~\eqref{eq:def-fi}, 
and of Eq.~\eqref{eq:S-uni} to Eq.~\eqref{eq:def-f}. 
In the special case where the envelopes $\E_j$ are affine,
$\S$ agrees with the universal service 
curve in Eq.~\eqref{eq:universalservice}.
The maximum in Eq.~\eqref{eq:S-uni} is attained by 
\begin{align}
\label{eq:M*}
M^*:= \{j\in\N\  \mid \E_j(t)\le \S_j(t)\}\,,
\end{align}
see Eq.~\eqref{eq:Msat}.

\begin{lemma} \label{lem:slopes}
Let $\N$, $\C$ and $\E_j$ be as in Theorem~\ref{thm:leftover}.
Given $\tau>0$, define $M^*$ by Eq.~\eqref{eq:M*}
with $t=\tau$ and $\E_i=+\infty$. Then 
\begin{align*}
x_j:=\frac{\dot \E_j(\tau_-)}{\dot \C(\tau_-)}\,,\quad
j\in M^*
\end{align*}
defines a feasible subset for the resource $X=1$.  
\end{lemma}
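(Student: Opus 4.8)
The plan is to reduce the claimed feasibility for the resource $X=1$ to a feasibility statement at the level of left-derivatives, and then to derive the latter from the feasibility that $M^*$ already enjoys at the level of envelope values. Writing $x_j=\dot\E_j(\tau_-)/\dot\C(\tau_-)$ and clearing the positive denominator $\dot\C(\tau_-)$, the feasibility condition \eqref{eq:feasible} with $X=1$ and $M=M^*$ is equivalent to
\[
\max_{j\in M^*}\frac{\dot\E_j(\tau_-)}{\phi_j}
\ \le\
\frac{\dot\C(\tau_-)-\sum_{j\in M^*}\dot\E_j(\tau_-)}
{\sum_{j\not\in M^*}\phi_j}\,,
\]
that is, to the statement that the requests $(\dot\E_j(\tau_-))_{j\in M^*}$ are themselves feasible for the resource $\dot\C(\tau_-)$. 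Since $\E_i=+\infty$, flow $i$ is never satisfied, so $i\notin M^*$ and every envelope appearing here is one of the genuine concave functions $\E_j$, $j\ne i$.

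The key tool is a pair of secant bounds that are available precisely because of the convexity and concavity hypotheses, together with the fact that all functions vanish at the origin. Because $\C$ is convex with $\C(0)=0$, the slope of the chord from $0$ to $\tau$ is dominated by the left-derivative at $\tau$, giving $\dot\C(\tau_-)\ge \C(\tau)/\tau$. Because each $\E_j$ is concave with $\E_j(0)=0$, the same chord slope dominates the left-derivative, giving $\dot\E_j(\tau_-)\le \E_j(\tau)/\tau$ (a possible jump $\E_j(0_+)\ge 0$ only strengthens this bound). Thus passing from values to left-derivatives costs at most a common factor $1/\tau$, with the two inequalities pointing in exactly the directions we need.

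Next I would recall the role of $M^*$ at the level of values. By Eq.~\eqref{eq:M*} and the discussion around Eq.~\eqref{eq:Msat}, $M^*$ is the set of satisfied players $M_{\rm sat}$ for the max-min problem with resource $\C(\tau)$ and requests $\E_j(\tau)$ (with $\E_i=+\infty$); denote its fair share by $f$. Then $\E_j(\tau)\le \phi_j f$ for every $j\in M^*$, and by Eq.~\eqref{eq:def-f} the maximum is attained at $M^*$, so $f=\bigl(\C(\tau)-\sum_{k\in M^*}\E_k(\tau)\bigr)/\sum_{k\not\in M^*}\phi_k$. Dividing these two facts by $\tau$ and inserting the secant bounds of the previous paragraph yields, for every $j\in M^*$,
\[
\frac{\dot\E_j(\tau_-)}{\phi_j}\le\frac{\E_j(\tau)}{\phi_j\tau}\le\frac{f}{\tau}
=\frac{\C(\tau)-\sum_{k\in M^*}\E_k(\tau)}{\tau\sum_{k\not\in M^*}\phi_k}
\le\frac{\dot\C(\tau_-)-\sum_{k\in M^*}\dot\E_k(\tau_-)}{\sum_{k\not\in M^*}\phi_k}\,,
\]
which is precisely the required derivative feasibility.

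I expect the only real obstacle to be recognizing that the convexity/concavity hypotheses enter solely through these origin-anchored secant bounds, which convert the level feasibility of $M^*$ into derivative feasibility with a clean cancellation of the factor $1/\tau$; once this is seen, the remainder is bookkeeping. Two minor points should be addressed: that $\dot\C(\tau_-)>0$, which holds as soon as $\C(\tau)>0$ since $\dot\C(\tau_-)\ge\C(\tau)/\tau$, so that the $x_j$ are well defined and the denominator may be cleared; and that the convention $\E_i=+\infty$ forces $i\notin M^*$, so that the bound $\dot\E_j(\tau_-)\le\E_j(\tau)/\tau$ is legitimately applied only to genuine concave envelopes.
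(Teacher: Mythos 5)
Your proof is correct and takes essentially the same approach as the paper: both arguments rest on the feasibility of $M^*$ at the level of envelope values (i.e., $M^*=M_{\rm sat}$ for the requests $\E_j(\tau)$ and resource $\C(\tau)$) combined with the origin-anchored bounds $\E_j(\tau)\ge\tau\,\dot\E_j(\tau_-)$ from concavity and $\C(\tau)\le\tau\,\dot\C(\tau_-)$ from convexity. The paper merely packages the transfer step more compactly---it notes $x_j\le x'_j:=\E_j(\tau)/\C(\tau)$ and uses that feasibility in Eq.~\eqref{eq:feasible} is preserved when requests decrease---whereas you verify the same inequality termwise through the fair share; the mathematical content is identical, and your added remarks on $\dot\C(\tau_-)>0$ and $i\notin M^*$ are sound.
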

Here, we used the notation $f(x_-) = \sup_{y<x} f (y)$. 
\begin{proof} By Eqs.~\eqref{eq:S-uni} and~\eqref{eq:M*},
the subset of requests
$x_j':=\frac{\E_j(\tau)}{\C(\tau)}$, $j\in M^*$
is feasible for $X=1$. Since $\E_j(\tau)\ge \tau \dot \E_j(\tau_-)$
by concavity and $\C(\tau)\le \tau\dot \C(\tau_-)$ by
convexity, we have $x'_j\ge x_j$
for all $j\in M^*$. Thus,   
$(x_j)_{j\in M^*}$ is a feasible subset.
\end{proof}

We next consider the special case of token-bucket envelopes and
latency-rate service curves. (The general proof 
follows immediately afterwards.)

\begin{lemma} \label{lem:LB-LR}
Under the hypotheses of Theorem~\ref{thm:leftover},
suppose additionally that the service curve
has the form $\C(t)=R(t-L)$, and the envelopes
are given by $\E_j(t)=\sigma_j+\rho_j t$
for $j\in\N\setminus\{i\}$.
Then Eq.~\eqref{eq:leftover} defines a strict
service curve for flow $i$.
\end{lemma}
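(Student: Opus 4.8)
The plan is to verify the strict-service-curve property directly. Fix an interval $(s,t)$ on which flow~$i$ is backlogged and write $\tau=t-s$; we must show $\Aout_i(s,t)\ge\S_i(\tau)$. We may assume $\tau>L$, since otherwise $\C(\tau)=0$, whence $\S_i(\tau)=0\le\Aout_i(s,t)$ and there is nothing to prove. Because flow~$i$ carries a positive backlog throughout $(s,t)$, the aggregate backlog is positive, so the interval contains no idle period and work-conservation yields the exact identity $\sum_{j\in\N}\Aout_j(s,t)=C(s,t)$. Thus $\Aout_i(s,t)=C(s,t)-\sum_{j\ne i}\Aout_j(s,t)$, and it suffices to bound the departures of the other flows from above, splitting $\N\setminus\{i\}$ by the subset $M^*$ of Eq.~\eqref{eq:M*} (taken with $\E_i=+\infty$) that attains the maximum defining $\S_i(\tau)$.

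On $M^*$ I would use the output side of Section~\ref{sec:backlog} rather than the envelopes themselves. Since $\dot\E_j(\tau_-)=\rho_j$ and $\dot\C(\tau_-)=R$, Lemma~\ref{lem:slopes} shows that the requests $x_j=\rho_j/R$, $j\in M^*$, form a feasible subset for $X=1$. Corollary~\ref{coro:upper} then gives $\sum_{j\in M^*}\Aout_j(s,t)\le\sum_{j\in M^*}\bigl(B_j^*(t)+\tfrac{\rho_j}{R}C(s,t)\bigr)$, and Eq.~\eqref{eq:B*} bounds $B_j^*(t)\le\sigma_j+\rho_j L$. For the remaining flows $j\in\N\setminus(M^*\cup\{i\})$ I would invoke the GPS inequality of Definition~\ref{def:WFQ}, which holds for \emph{every} $j$ precisely because $i$ is the backlogged flow, to get $\Aout_j(s,t)\le\tfrac{\phi_j}{\phi_i}\Aout_i(s,t)$.

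Substituting both bounds into the work-conservation identity and collecting the terms in $\Aout_i(s,t)$, its coefficient collapses to $\tfrac{1}{\phi_i}\sum_{j\notin M^*}\phi_j$ (using $i\notin M^*$), while the coefficient multiplying $C(s,t)$ on the other side equals $\tfrac1R\bigl(R-\sum_{j\in M^*}\rho_j\bigr)$. Feasibility of $(x_j)_{j\in M^*}$ forces this last coefficient to be nonnegative, which legitimizes replacing $C(s,t)$ by the smaller quantity $\C(\tau)=R(\tau-L)$. After the substitution the latency-rate and token-bucket contributions cancel, $R(\tau-L)-\sum_{j\in M^*}(\sigma_j+\rho_j L)-(\tau-L)\sum_{j\in M^*}\rho_j=\C(\tau)-\sum_{j\in M^*}\E_j(\tau)$, leaving $\Aout_i(s,t)\ge\tfrac{\phi_i}{\sum_{j\notin M^*}\phi_j}\bigl(\C(\tau)-\sum_{j\in M^*}\E_j(\tau)\bigr)=\S_i(\tau)$, as required.

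I expect the main obstacle to be recognizing that one must \emph{not} bound $\Aout_j(s,t)\le\E_j(\tau)$ directly: a flow $j$ can release backlog accumulated before time $s$, so its departures on $(s,t)$ may exceed the envelope increment $\E_j(\tau)$. This is exactly the gap repaired by the output-burstiness estimate, whose burst term $B_j^*$ caps the pre-$s$ backlog, and it is the reason feasibility of the chosen subset---together with the resulting nonnegativity of the $C(s,t)$-coefficient that permits the passage to $\C(\tau)$---is indispensable. The remainder is linear bookkeeping in the weights $\phi_j$ and the algebraic cancellation that reproduces $\C(\tau)-\sum_{j\in M^*}\E_j(\tau)$.
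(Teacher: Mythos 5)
Your proof is correct and follows essentially the same route as the paper's: the same maximizing set $M^*$ with $\E_i=+\infty$, feasibility of $x_j=\rho_j/R$ via Lemma~\ref{lem:slopes}, the output-burstiness bound of Corollary~\ref{coro:upper} together with Eq.~\eqref{eq:B*}, and the GPS inequality~\eqref{eq:WFQ} to transfer the leftover service to flow~$i$. The only differences are cosmetic: you write work-conservation as an exact identity and collect the $\Aout_i(s,t)$ terms, whereas the paper lower-bounds $\sum_{j\notin M^*}\Aout_j(s,t)$ and then applies Eq.~\eqref{eq:WFQ} to that aggregate, and your separate handling of the trivial case $\tau\le L$ is a harmless (indeed slightly more careful) addition.
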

\begin{proof}
Suppose that flow $i$ is backlogged
on some interval $(s,t)$.
We need to show that $D_i(s,t)\ge S_i(t-s)$.

Set $\tau=t-s$. Let $M^*$ be as in 
Eq.~\eqref{eq:M*} with $\tau$ in place of $t$ and $\E_i=+\infty$, 
and set $x_j=\frac{\rho_j}{R}$
for $j\in M^*$. By Lemma~\ref{lem:slopes}, the 
subset of requests $(x_j)_{j\in M^*}$ 
is feasible for $X=1$.
By Corollary~\ref{coro:upper},
\begin{align*}
\sum_{j\in M^*} 
D_j(s,t)\le \sum_{j\in M^*} \left\{B_j^*(t) +x_j C(s,t)\right\}\,.
\end{align*}
Since the scheduler is workconserving, it follows that
\begin{align*}
\sum_{j\not\in M^*} D_j(s,t) 
&\ge \Bigl(\!1-\!\sum_{j\in M^*} x_j\!\Bigr)
C(s,t) -\!\sum_{j\in M^*} B_j^*(t)\\
&\hspace{-1cm}\ge \Bigl(\!R-\!\sum_{j\in M^*} \rho_j\!\Bigr)
(t\!-\!s\!-\!L)
-\!\sum_{j\in M^*} \left\{\sigma_j + \rho_j L\right\}\\
&\hspace{-1cm}=\C(t-s)-\sum_{j\in M^*} \E_j(t-s)\,.
\end{align*}
In the first line, the coefficient of $C(s,t)$
is nonnegative by the feasibility of $(x_j)_{j\in M^*}$.
In the second line, we have used that
$C(t)\gtrsim R(t-L)$ and applied Eq.~\eqref{eq:B*}.
In the last line, we have canceled the terms
$\rho_jL$
and inserted the envelopes and service curves.
By Eq.~\eqref{eq:WFQ},
\begin{align*}
D_i(s,t) &\ge \frac{\phi_i}{\sum_{j\not\in M^*} \phi_j} 
\sum_{j\not\in M^*} D_j(s,t)\\
& \ge \frac{\phi_i}{\sum_{j\not\in M^*} 
\phi_j} \Bigl(\C(t-s)-\sum_{j\in M^*}\E_j(t-s)\Bigr)\\
&=\S_i(t-s)\,.
\end{align*}
The final step used the maximality of $M^*$
in Eq.~\eqref{eq:leftover}.
\end{proof}

We are ready to tackle the main result.

\begin{proof}[Proof of Theorem~\ref{thm:leftover}]
Given $0\le s< t$, set $\tau=t-s$, and fix $i\in \N$.
For $j\in \N\setminus\{i\}$, 
consider the tangent line to the graph of $\E_j$
at $\tau$, defined by 
$\E'_j(u)= \sigma_j+\rho_j u$ with
\begin{align*}
\rho_j:=\dot \E_j(\tau_-)\,,\qquad
\sigma_j:= \E_j(\tau)-\rho_j\tau\ge 0\,.
\end{align*}
Since $\E_j\le \E'_j$ by concavity, 
the arrival process $A_j$ complies to the
token-bucket envelope $\E'_j$. 
Also consider the tangent line
to $\C$ at $\tau$, defined by $\C'(u)=R(u-L)$ with
\begin{align*}
R:=\dot\C(\tau_-)\,,\quad L:= \tau - \tfrac{\C(\tau)}{R}\ge 0\,.
\end{align*}
Since $\C\ge \C'$ by convexity, the
service process $C$ complies to the latency-rate service curve $\C'$.
By Lemma~\ref{lem:LB-LR},
\begin{align*}
\S'_i:= \max_{M \subseteq \N\setminus \{i\}} 
\frac{\phi_i}{\sum_{j\not\in M}\phi_j} 
\Bigl(\C' - \sum_{j\in M}\E'_j\Bigr)
\end{align*}
is a strict service curve for flow $i$. In particular,
if flow $i$ is backlogged on $(s,t)$ then
\begin{align*}
D_i(s,t)\ge  \S'_i(t-s) = \S_i(t-s)\,,
\end{align*}
where the equality is by the choice of $\tau=t-s$.
We conclude that $\S_i$ is a strict service curve.
By Lemma~\ref{lem:greedy} below, there are
scenarios where the departures saturate the service curve.
Therefore $\S_i$ is best possible.
\end{proof}



\begin{lemma}
[The greedy/lazy scenario]
\label{lem:greedy} 
In the setup of Theorem~\ref{thm:leftover},
let the service process be
$C(t)=\C(t)$, and the arrival processes 
$\Ain_j(t)= \E_j(t)$ for 
$j\in\N$ and $t\ge 0$. 
Then
\begin{align}
\label{eq:greedy}
\Aout_j(t)=\min\bigl \{\E_j(t),\S_j(t)\bigr\}\,,\quad j\in \N\,.
\end{align}
\end{lemma}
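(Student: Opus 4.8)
The plan is to recognize the right-hand side of Eq.~\eqref{eq:greedy} as a max-min fair allocation and then to verify that it reproduces the GPS output on the greedy/lazy input. Fix $t>0$ and read Eqs.~\eqref{eq:def-f}--\eqref{eq:def-fi} with resource $X=\C(t)$, requests $x_j=\E_j(t)$, and weights $\phi_j$. By Lemma~\ref{lem:frac} this gives $\S_j(t)=\phi_j f(t)$, so $\min\{\E_j(t),\S_j(t)\}=y_j(t)$ is exactly the fair share of flow $j$, and the set $M^*$ in Eq.~\eqref{eq:M*} coincides with $M_{\rm sat}$ in Eq.~\eqref{eq:Msat}. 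Writing $\hat\Aout_j:=\min\{\E_j,\S_j\}$ for the candidate departures, the associated backlog is $\hat B_j=[\E_j-\S_j]_+$, so the backlogged flows are precisely the unsatisfied ones, $U:=\N\setminus M^*$. The waste-free identity $\sum_{j\in\N}\hat B_j=[\sum_{j\in\N}\E_j-\C]_+$ from Section~\ref{sec:feasible} yields $\sum_{j\in\N}\hat\Aout_j=\min\{\sum_{j\in\N}\E_j,\C\}$; since $\sum_{j}\E_j-\C$ is concave and vanishes at the origin, the aggregate backlog is positive exactly on an initial interval, on which the candidate runs at the full service rate, so the candidate is workconserving.

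The substantive point is the per-flow condition Eq.~\eqref{eq:WFQ}, which I would obtain by showing $\Aout_j(t)=\hat\Aout_j(t)$ directly, splitting into satisfied and unsatisfied flows. For a satisfied flow $j\in M^*$ I claim it is served in full, $\Aout_j(t)=\E_j(t)$. Apply Corollary~\ref{coro:backlog} to the set $M^*$ with requests $x_j:=\E_j(t)/\C(t)$ for the resource $X=1$; these are feasible precisely because $M^*=M_{\rm sat}$, cf. Definition~\ref{def:feasible}. The key observation is $B_j^*(t)=0$ for $j\in M^*$: because $\E_j$ is concave and $\C$ is convex, the ratio $\E_j(r)/\C(r)$ is nonincreasing in $r$, whence $\E_j(r,t)\le x_j\,\C(r,t)$ for every $r\le t$. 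Corollary~\ref{coro:backlog} then forces $B_j(t)=0$, and with $\E_j(t)\le\S_j(t)$ this gives $\Aout_j(t)=\E_j(t)=\min\{\E_j(t),\S_j(t)\}$.

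For an unsatisfied flow $j\in U$, I first note that $\E_j-\S_j$ is concave (since $\S_j$ is convex) and vanishes at the origin, so $\E_j(u)>\S_j(u)$ for all $u\in(0,t)$ once it holds at $u=t$. Granting that every flow in $U$ is backlogged on all of $(0,t)$, two facts combine. First, $\S_j$ is a strict service curve by Theorem~\ref{thm:leftover}, so $\Aout_j(t)=\Aout_j(0,t)\ge\S_j(t)$ for each $j\in U$. Second, since all of $U$ is backlogged on $(0,t)$, Eq.~\eqref{eq:WFQ} applied within $U$ forces $\Aout_j(0,t)/\phi_j$ to be equal across $U$; and since the server is then busy throughout $(0,t)$, workconserving together with the full service of $M^*$ gives $\sum_{j\in U}\Aout_j(t)=\C(t)-\sum_{j\in M^*}\E_j(t)=\sum_{j\in U}\S_j(t)$. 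The common value of $\Aout_j(0,t)/\phi_j$ is therefore $f(t)$, so $\Aout_j(t)=\phi_j f(t)=\S_j(t)=\min\{\E_j(t),\S_j(t)\}$, which completes the identification.

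The main obstacle is exactly the claim used above that every unsatisfied flow is backlogged throughout $(0,t)$ --- equivalently, the per-flow upper bound $\Aout_j\le\S_j$ for $j\in U$. The satisfied side is immediate from Corollary~\ref{coro:backlog}, but the unsatisfied side cannot be read off a single feasibility estimate, since an unsatisfied flow never belongs to a feasible subset carrying its true demand; it genuinely requires the interplay of workconserving and fairness. I expect to close it by the concavity propagation above combined with a rate-level argument on the almost-everywhere differentiable parts of $\E_j$ and $\C$: where $\E_j<\S_j$ the inequality $\dot\E_j\le\phi_j\dot f$ follows by signing the derivative of the concave function $\E_j-\S_j$, and where $\E_j>\S_j$ the flow draws its proportional share $\phi_j\dot f=\dot\S_j$; this is where the convexity of $\C$ and concavity of the $\E_j$ are indispensable. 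Care will also be needed for initial bursts (jumps of $\E_j$ at $0^+$) and for latency intervals where $\dot\C=0$, which the cumulative formulation absorbs but a rate argument must treat separately. Once the equality is established, the best-possible claim in Theorem~\ref{thm:leftover} follows by taking $\E_i$ large enough that flow $i$ is backlogged, giving $\Aout_i=\S_i$.
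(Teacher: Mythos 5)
Your first two paragraphs are sound: the identification of $\min\{\E_j(t),\S_j(t)\}$ with the max-min fair allocation of $X=\C(t)$, the aggregate identity $\sum_j \Aout_j(t)=\min\bigl\{\sum_j\E_j(t),\C(t)\bigr\}$ via work conservation and the concavity endpoint argument, and the satisfied-flow case (feasibility of $x_j=\E_j(t)/\C(t)$ on $M^*$, the monotone ratio $\E_j(r)/\C(r)$ giving $B_j^*(t)=0$, hence $B_j(t)=0$ by Corollary~\ref{coro:backlog}) all check out; the last of these is a nice self-contained argument the paper does not even need. But there is a genuine gap exactly where you flag it: the claim that every unsatisfied flow is backlogged on all of $(0,t)$, equivalently the per-flow upper bound $\Aout_j\le\S_j$ on $U$. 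Your concavity propagation only controls the \emph{candidate} backlog $[\E_j-\S_j]_+$, not the actual one, and the rate-level plan you sketch to close it is circular in the paper's logical order: the statement that a backlogged flow ``draws its proportional share $\phi_j\dot f=\dot\S_j$'' is the fair-share dynamics of Eq.~\eqref{eq:ODE}, which the paper \emph{derives from} Lemma~\ref{lem:greedy}; moreover $\Aout_j$ carries no a priori differentiability, so signing derivatives almost everywhere does not yield the cumulative bound without substantial extra work.

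The paper closes this with a squeeze that needs no upper bound and no case split, and you already have every ingredient for it. Since $\S_j$ is a strict service curve (established in the proof of Theorem~\ref{thm:leftover} \emph{before} Lemma~\ref{lem:greedy} is invoked, so there is no circularity), let $s^*$ be the start of the busy period of flow $j$ containing $t$; then $\Aout_j(t)=\E_j(s^*)+\Aout_j(s^*,t)\ge \E_j(s^*)+\S_j(t-s^*)\ge \E_j(s^*)+\S_j(t)-\S_j(s^*)$ by superadditivity of the convex $\S_j$, and hence $\Aout_j(t)\ge\inf_{0\le s\le t}\{\E_j(s)+\S_j(t)-\S_j(s)\}=\min\{\E_j(t),\S_j(t)\}$, because $\E_j(s)-\S_j(s)$ is concave in $s$ so the infimum sits at an endpoint. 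Crucially this lower bound holds for \emph{every} flow with no backlog hypothesis on $(0,t)$. Summing it over $j\in\N$ and comparing with your aggregate identity, which by the waste-free property equals $\sum_j\min\{\E_j(t),\S_j(t)\}$, forces equality term by term. So your proposal is repairable by replacing the entire unsatisfied-flow paragraph (and the backlog-propagation claim it depends on) with this unconditional lower bound; as written, however, the proof is incomplete.
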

\begin{proof} Let $t>0$ be given.
Since the scheduler is workconserving,
the aggregate departures satisfy
\begin{align*}
\sum_{j\in \N} \Aout_j(t) =
\inf_{0\le s\le t} \Bigl\{ \sum_{j\in \N} A_j(s) + C(s,t)\Bigr\}\,.
\end{align*}
Inserting the assumptions on the arrival and service
processes, we obtain
\begin{align}
\notag
\sum_{j\in \N} \Aout_j(t)
&= \inf_{0\le s\le t} \Bigl\{ \sum_{j\in \N} \E_j(s) + \C(t)-\C(s)\Bigr\}\\
\notag &= \min \Bigl\{ \sum_{j\in \N} \E_j(t), \C(t)\Bigr\}\\
\label{eq:greedy-proof}
&=\sum_{j\in \N} \min \bigl\{\E_j(t), \S_j(t)\bigr\}\,.
\end{align}
The second step follows
since the minimum is attained at $s=0$ or $s=t$ by concavity.
In the last step, we have used
that $y_j =\min\{\E_j(t),\S_j(t)\}$
is a max-min fair allocation of the resource
$X=\C(t)$, and therefore waste-free.

On the other hand, since $\S_j$ is a service curve
for flow~$j$,
\begin{align*}
D_j(t) &\ge \inf_{0\le s\le t} \bigl\{ \E_j(s) + \S_j(t)-\S_j(s)\bigr\}\\
&= \min\bigl\{\E_j(t),\S_j(t)\bigr\}\,.
\end{align*}
Since this holds for every $j\in\N$,
by Eq.~\eqref{eq:greedy-proof} it holds with equality.
\end{proof}

Lemma~\ref{lem:greedy} demonstrates that the departures from
a GPS scheduler in the greedy scenario
necessarily satisfy Eq.~\eqref{eq:greedy}.
For completeness of the argument, we 
show that these departures
actually conform to Definition~\ref{def:WFQ}. 
The workconserving property follows 
from the waste-free property of
the max-min fair allocation.
It remains to verify Eq.~\eqref{eq:WFQ} on
an arbitrary interval where flow $i$ is backlogged.

Eq.~\eqref{eq:greedy} yields 
$B_j(t)=\bigl[\E_j(t)-\S_j(t)]_+$.
By concavity, the ratio $\frac{B_j(t)}{t}$
is nonincreasing in $t$. Therefore, if
flow $i$ is backlogged at time $t$,
then it is backlogged for all $0<s\le t$.
By Eq.~\eqref{eq:maxmin},
$\frac{D_i(t)}{\phi_i} \ge \frac{D_j(t)}{\phi_j}$,
with equality if flow $j$ is backlogged as well.
If flow $j$ is backlogged at time $s$, then
$\frac{D_i(s)}{\phi_i} =\frac{D_j(s)}{\phi_j}$,
and Eq.~\eqref{eq:WFQ} follows.
Otherwise, flow $j$ is not backlogged at time
$s$, and $D_j(s)=\E_j(s)$. The difference
\[
\frac{D_i(s,t)}{\phi_i}- \frac{D_j(s,t)}{\phi_j}
= \frac{\S_i(t)-\S_i(s)}{\phi_i} - \frac{\E_j(t)-\E_j(s)}{\phi_j}
\] 
is concave in~$s$, and nonnegative at $s=0, t$.
Therefore it is nonnegative for every $0\le s\le t$,
proving Eq.~\eqref{eq:WFQ} also in this case.

\section{The backlog process}
\label{sec:wellposed}

We briefly address the question how to
describe the departures from a GPS scheduler 
with a general nondecreasing service process $C(t)$ 
and nondecreasing arrival 
processes $A_j(t)$, $j\in\N$. We will
argue that the workconserving property
together with Eq.~\eqref{eq:WFQ}
completely determines the backlog process, 
and hence the departures.

Consider once more the relation
between the GPS scheduler and max-min fairness,
as evidenced by Eq.~\eqref{eq:WFQ} and Eq.~\eqref{eq:maxmin}.
The departures $D_j(s,t)$ over a time interval $[s,t)$ 
define an allocation of the resource $X=C(s,t)$
among a set of flows $j\in\N$,
each of which requests a share $x_j=B_j(s)+A_j(s,t)$.
The backlog $B_j(t)$ plays the role of the 
unmet demand. 

On any interval where the arrival processes
$A_j(t)$ are concave and
$C(t)$ is convex, the departures are given by
the max-min fair allocation
\begin{align*}
D_j(s,t) = \min\bigl\{B_j(s)+A_j(s,t),\phi_j f\bigr\}\,,\quad j\in \N\,,
\end{align*}
where $f$ is defined
by Eq.~\eqref{eq:def-f} with $x_j=B_j(s)+A_j(s,t)$
and $X=C(s,t)$.
This follows by applying Lemma~\ref{lem:greedy}
to the
time-shifted processes $A'_j(\tau)=B_j(s) +A_j(s,s+\tau)$
and $C'(\tau)=C(s,s+\tau)$,
and then setting $\tau=t-s$.
The backlog satisfies the difference equation
\begin{align}
\label{eq:backlog-process}
B_j(t) = \bigl[B_j(s)+A_j(s,t)-\phi_j f\bigr]_+\,,\quad j\in\N\,.
\end{align}

However, Eq.~\eqref{eq:backlog-process}
cannot hold for general arrival and service processes
on arbitrary intervals.
Flows that are backlogged at time $t$  but are
idle at an earlier time $s<t$ receive less service than 
indicated by Eq.~\eqref{eq:backlog-process}.
The underlying reason is that Eq.~\eqref{eq:WFQ}
provides no explicit service guarantees for such flows.

Since Eq.~\eqref{eq:backlog-process} is valid when $s$ is so close to $t$
that the set of backlogged flows remains
constant from $s$ to $t$, taking the limit
$s\to t$ yields the differential equation 
\begin{align}
\label{eq:ODE}
\dot B_i(t) = 
\dot A_i(t)- 
\frac{\phi_i}{\sum_{j\not\in M(t)}\phi_j}
\Bigl(\dot C(t)\!-\!\!\sum_{j\in M(t)}\dot A_j(t)\Bigr)\,,
\end{align}
so long as  $B_i(t)>0$. 
Here, $M(t)=\{j\in\N\mid  B_j(t)=0\}$ is the set of 
flows that are not backlogged at time $t$.
The differential equation
holds at every time $t$ where
the arrival and service processes are differentiable,
except at instants where $M(t)$ changes.
(If the arrival and service processes
are not absolutely continuous, 
the differential equation should be supplemented 
by equations that account for their jumps 
and singular continuous components.)

Eq.~\eqref{eq:ODE} determines
the backlog process on intervals where $M(t)$ is constant.
These intervals in turn depend on the departures, rendering 
the differential equation nonlinear.
Standard theorems that guarantee the existence and 
uniqueness of solutions for nonlinear
differential equations do not apply, because the right 
hand side of Eq.~\eqref{eq:ODE} 
does not have the requisite continuity properties.  

We construct the backlog process as follows.
Given arrival and service processes
$A_j(t)$ and $C(t)$, we approximate them with
piecewise linear nondecreasing functions.
Specifically, we consider the class of functions that
are linear on intervals $(t_\ell,t_{\ell+1}]$,
where the breakpoints $t_\ell$ form an increasing
sequence with $t_0=0$ and $\lim t_\ell=+\infty$.
Jumps are permitted at each $t_\ell$. 
Since linear functions are 
simultaneously convex and concave,
Lemma~\ref{lem:greedy} implies that
the backlog process for the approximating
scenario satisfies Eq.~\eqref{eq:backlog-process}
on each interval $(t_\ell, t_{\ell+1}]$.
Then $B_j(t)$ and $\Aout_j(t)$
lie again in the piecewise linear class, 
with at most $|\N|$
additional breakpoints appearing
between $t_\ell$ and $t_{\ell+1}$ at instants where 
some flow ceases to be backlogged.
By Lemma~\ref{lem:mono-contraction}, all errors can be 
bounded explicitly in terms of the original discretization
error.  Consequently, the backlog process does not depend on
the precise approximation scheme that was used
in its construction. 

Thanks to Lemma~\ref{lem:mono-contraction},
the backlog evolves by an order-preserving 
family of contractions.
One implication is that the backlog process at
a GPS scheduler with random stationary arrival
and service processes that is started with empty queues
is stochastically increasing, in analogy to \cite[Lemma~9.1.4]{Book-Chang}. As $t\to\infty$, 
the flows separate into two groups,
one consisting of underloaded  flows whose backlog 
process approaches a steady state, and the other of overloaded    
flows whose backlog becomes unbounded.

\section{Conclusions}
\label{sec:concl}
We have addressed a longstanding open problem in the theory of 
fair queueing algorithms, and extended the strict service 
curve formulation for GPS schedulers by Parekh and Gallager  to concave arrival 
envelopes and links with time-variable capacity. We show that 
the service curves holds under any load condition, and is not 
limited to stable systems. 
With this paper, the leftover service curve formulation for GPS 
has a comparable degree of generality as existing 
leftover formulations of other `classical' 
scheduling algorithms, such as Static Priority, 
FIFO, and Earliest-Deadline-First.

\section*{Acknowledgements}
This work is supported in part by the Natural Sciences and Engineering Research Council of Canada (NSERC).

\bibliographystyle{plain}

\end{document}